\newcommand{\ketbra}[1]{\ensuremath{\ket{#1}\bra{#1}}}
\newcommand{\id}{\ensuremath{\mathbbm{1}}}
\newcommand{\abs}[1]{\lvert #1 \rvert}
\newcommand{\tr}{\ensuremath{\operatorname{tr}}}
\renewcommand{\Re}{\operatorname{Re}}
\renewcommand{\Im}{\operatorname{Im}}
\definecolor{lavender}{rgb}{0.71, 0.49, 0.86}
\begin{document}
\title{Dynamics of Open Quantum Systems with Initial System-Environment Correlations via Stochastic Unravelings}

\author{Federico Settimo}
\email{fesett@utu.fi}
\affiliation{Department of Physics and Astronomy,
University of Turku, FI-20014 Turun yliopisto, Finland}

\author{Kimmo Luoma}
\affiliation{Department of Physics and Astronomy,
University of Turku, FI-20014 Turun yliopisto, Finland}

\author{Dariusz Chru\'sci\'nski}
\affiliation{Institute of Physics, Faculty of Physics, Astronomy and Informatics,
Nicolaus Copernicus University, Grudziadzka 5/7, 87-100 Toru\'{n},
Poland}

\author{Andrea Smirne}
\affiliation{Dipartimento di Fisica ``Aldo Pontremoli'', Universit{\`a} degli Studi di Milano, Via Celoria 16, I-20133 Milan, Italy}
\affiliation{Istituto Nazionale di Fisica Nucleare, Sezione di Milano, Via Celoria 16, I-20133 Milan, Italy}

\author{Bassano Vacchini}
\affiliation{Dipartimento di Fisica ``Aldo Pontremoli'', Universit{\`a} degli Studi di Milano, Via Celoria 16, I-20133 Milan, Italy}
\affiliation{Istituto Nazionale di Fisica Nucleare, Sezione di Milano, Via Celoria 16, I-20133 Milan, Italy}

\author{Jyrki Piilo}
\affiliation{Department of Physics and Astronomy,
University of Turku, FI-20014 Turun yliopisto, Finland}

\begin{abstract}
    In standard treatments of open quantum systems, the reduced dynamics is  described starting from the assumption that the system and the environment are initially uncorrelated.
    This assumption, however, is not always guaranteed in realistic scenarios and several theoretical approaches to characterize initially correlated dynamics have been introduced.
    For the uncorrelated scenario, stochastic unravelings are a powerful tool to simulate the dynamics.
    So far they have not been used in the most general case in which correlations are initially present{, since they cannot be applied to non-positive operators or non completely positive maps.}
    In our work, we employ the bath positive (B+) or one-sided positive decomposition (OPD) formalism as a starting point to generalize stochastic unraveling in the presence of initial correlations.
    Noticeably, our approach doesn't depend on the particular unraveling technique, but holds for both piecewise deterministic and diffusive unravelings.
    This generalization allows not only for more powerful simulations for the reduced dynamics, but also for a deeper theoretical understanding of open system dynamics.
\end{abstract}

\maketitle

\section{Introduction}
\label{sec:intro}
In standard open quantum system scenarios, one typically starts from the assumption that the system and the environment are {in a product state} at the initial time \cite{Breuer-Petruccione, Rivas-Huelga-OQS, Vacchini-OQS}
\begin{equation}
    \label{eq:factorized_init_state}
    \rho_{SE}(0) = \rho_S(0)\otimes\rho_E,
\end{equation}
with some fixed environmental state $\rho_E$.
If this is the case, the global system environment evolution is fixed by a unitary $U(t)$, while for the system is given by $\rho_S(t) = \Phi_t[\rho_S(0)]$, where \cite{Chruscinski-nM-local,Chruscinski2022}
\begin{equation}
    \label{eq:dynamics}
    \Phi_t[\cdot] = \tr_E[U(t)(\cdot\otimes\rho_E)U^\dagger(t)]
\end{equation}
is a completely positive trace-preserving (CPTP) map whose domain extends to the whole set of quantum states $\mathcal S(\mathscr H_S)$.
Assuming that $\Phi_t$ is invertible, one can introduce a two-parameters family of maps $\Phi_{t,s}\coloneqq \Phi_t\Phi_s^{-1}$, describing the evolution from time $s$ to time $t>s$.
If $\Phi_{t,s}$ is completely positive (CP) for all $t,s$, we say that the dynamics is CP divisible.
If $\Phi_{t,s}$ is only positive (P), then we say that it is P divisible.
Violations of both P and CP divisibility have been connected to quantum non-Markovianity \cite{BLP, BLP-PRA, RHP, rivas-quantum-nm, Wißmann-BLP-P-div, BLPV-colloquium, Budini2022}.

{The dynamical map $\Phi_t$ can be derived as the solution of the master equation} $d\rho/dt = \mathcal L_t[\rho]$, with the generator $\mathcal L_t = \dot\Phi_t\Phi_t^{-1}$ that can be put in Lindblad form \cite{Gorini1976,Lindblad1976}
\begin{equation}
    \label{eq:GKSL_ME}
    \mathcal L_t[\rho] = -i[H(t),\rho] + \sum_j\gamma_j(t)L_j(t)\rho L_j^\dagger(t)-\frac12\left\{\Gamma(t),\rho\right\},
\end{equation}
where $\Gamma(t) = \sum_j\gamma_j(t)L_j^\dagger(t)L_j(t)$.
From the point of view of the master equation, CP divisibility corresponds to the positivity of all rates $\gamma_i(t)\ge0$.
P divisibility, on the other hand, corresponds to the condition \cite{Kossakowski-necessary}
\begin{equation} \sum_j\gamma_j(t)\abs{\braket{\varphi_\mu\vert L_j(t)\vert\varphi_{\mu^\prime}}}^2\ge0
\end{equation}
for all orthonormal bases $\{\varphi_\mu\}_\mu$ and for all $\mu\ne\mu^\prime$.

However, the assumption of a {product} initial state, {although leading to a simple form of the master equation,} cannot cover all situations of interest {for instance when system and environment are strongly coupled}, so that it has been criticized from a fundamental and a practical perspective \cite{Pechukas1994, Alicki1995, Shaji2005, Carteret2008, Brodutch2013, Dominy2016}.
Different strategies have thus been devised both for the detection of initial correlations \cite{Dajka2010, Laine2010, Smirne2010, Zhang2010, Laine2012, HamedaniRaja2020} and their inclusion in the dynamical description \cite{Modi2012-operational-init-corr, Vacchini2016, Paz-Silva-B+, Colla2022}.
There is indeed no obvious and unique standpoint allowing to obtain the reduced dynamics of a system interacting with an environment if all these degrees of freedom are correlated.
In all cases, at variance with what happens in the framework of an initially factorized state, not all possible system states can appear as initial reduced system state.
A recently proposed strategy \cite{Colla2022}, that takes as starting point a fixed environmental state $\rho_E$ and a fixed correlation operator
$\chi$, admits as domain of possible initial reduced system states the statistical operators $\rho_S$ such that $\rho_S\otimes \rho_E+\chi$ is a positive operator.
The size of this convex set depends on the pair $\rho_E$ and $\chi$, and can even be reduced to a single point if environmental state and correlation operator are determined by a maximally entangled system-environment state.
For all elements in this set, the dynamics can be described by a time-local master equation of the form \eqref{eq:GKSL_ME}.
A different strategy \cite{Paz-Silva-B+} takes as starting point a specified correlated system-environment state, for which a suitable decomposition known as bath positive (B+) or one-sided positive decomposition (OPD) is given, which allows to obtain the time evolved reduced system state in terms of the action of a set of completely positive trace preserving maps of the form \eqref{eq:dynamics}, whose dimensionality is bounded by the square of the dimension of the system Hilbert space.
In this case the dynamics is no more described by a {single} time-local master equation of the form \eqref{eq:GKSL_ME}{, but by a set of master equations, one for each completely positive map}.
While the strategy works for a specified correlated system-environment state, suitable linear combinations of the maps introduced for this state allow to deal with all other initial states obtained by local transformations on the system degrees of freedom.

The master equation \eqref{eq:GKSL_ME} is typically very difficult to solve or simulate, and a powerful tool to deal with it is that of stochastic unravelings.
They consist of stochastic processes taking values on the system's Hilbert space, and the exact dynamics \eqref{eq:dynamics} is obtained by averaging over a large number of stochastic realizations{
\begin{equation}
    \label{eq:avg_unravelings}
    \rho(t) = \sum_{i}\frac{N_i(t)}N\ketbra{\psi_i(t)},
\end{equation}
where the $\ket{\psi_i(t)}$ are suitable random pure states.}
In different unraveling methods, the stochastic realizations can be of two major families: they can either consist of piecewise deterministic processes, interrupted by sudden jumps \cite{Plenio-jumps-review, Dalibard-MCWF, Diosi-orthogonal-jumps, PhysRevA.69.042107, Piilo-NMQJ-PRL, Piilo-NMQJ-PRA, PhysRevA.88.012124, Smirne-W, Chruscinski-Quantum-RO, Settimo-psi-ro, Settimo2025}, or they can be diffusive \cite{Percival-QSD, Gisin1992, Diosi-NMQSD, Caiaffa-W-diffusive, Luoma-diffusive-NMQJ}.
{In the following, we will consider different unraveling techniques, since our results are independent on the particular unraveling scheme used.
For the piecewise deterministic methods, we will consider the Monte-Carlo Wave Function (MCWF) \cite{Dalibard-MCWF}, which gives positive jump rates if and only if the dynamics is CP divisible, and the generalized rate operator (RO) \cite{Settimo-psi-ro, Settimo2025}, which can give positive rates also in some non-P divisible dynamics.
Both methods can be equipped with the non-Markovian Quantum Jumps (NMQJ) technique \cite{Piilo-NMQJ-PRL, Piilo-NMQJ-PRA} whenever the jump rates are temporarily negative.
For the diffusive methods, we will consider the Quantum State Diffusion (QSD) \cite{Gisin1992}, which, like the MCWF, can be applied only to CP divisible dynamics, although extensions to non-Markovian dynamics have been proposed \cite{Diosi-NMQSD}.
In Appendix \ref{app:unravelings} we provide an overview of these unraveling methods.}

So far, these stochastic methods have only been applied to scenarios in which the system and the environment are initially in a product state, since they are defined starting from the Lindblad master equation, which is guaranteed to hold only under this assumption.
In this work, we will present a method to apply unravelings techniques also in the case in which the dynamics cannot be obtained from a {single} master equation of the form \eqref{eq:GKSL_ME}, so as to allow the use of stochastic methods in combination with the OPD formalism for the treatment of initially correlated states of system and environment.

The rest of the paper proceeds as follows.
In Sec.~\ref{sec:OPD}, we recall two techniques for describing initially correlated dynamics.
Then, in Sec.~\ref{sec:unr_init_corr}, we present our main result, a simple way to apply the stochastic unravelings to the non-positive operators that are used in the OPD, with a simple example of the usefulness of our method.
In Sec.~\ref{sec:APO}, we recall the adaptive projection operator (APO) technique to derive the second order master equations corresponding to the CPTP maps of the OPD.
We also provide some examples of unravelings of initially correlated dynamics obtained via the APO technique.
Finally, in Sec.~\ref{sec:conclusions}, we present the conclusions of our work.

\section{Initially correlated system and environment}
\label{sec:OPD}
We now present in more detail the two techniques to deal with initially correlated system and environment states recalled in the introduction to better clarify the difference between the approaches and the point of connection with stochastic methods for the solution of the obtained reduced equations of motion.

\subsection{One-sided positive decomposition}
\label{subsec:OPD}
One possible way to describe the reduced dynamics of initially correlated system and environment is via the so-called OPD.
{This technique relies on the fact that any system side operator can be expanded in terms of frames}, i.e. possibly overcomplete basis of system side operators $\{Q_\alpha\}_\alpha\subset\mathcal L_2(\mathscr H_S)$ of self-adjoint Hilbert-Schmidt class operators \cite{Ali2000,Renes2004}.
{For arbitrarily dimensional systems and environments, one can always construct frames} such that any bipartite system-environment state $\rho_{SE}\in\mathcal S(\mathscr H_S\otimes\mathscr H_E)$ can be written as \cite{Paz-Silva-B+}
\begin{equation}
    \label{eq:OPD_state}
    \rho_{SE} = \sum_\alpha w_\alpha Q_\alpha\otimes\rho_\alpha,
\end{equation}
{where $w_\alpha$ are positive numbers, and the environmental operators $\rho_\alpha$ corresponding to the frame element $Q_\alpha$ are density matrices, i.e. environmental states.}
The representation of Eq.~\eqref{eq:OPD_state} is known as OPD.
{Such frame $\{Q_\alpha\}_\alpha$ giving states on the environmental side is highly non-unique, see \cite{Paz-Silva-B+} for more details.
In this work, we consider a special case of frame $\{Q_\alpha\}_\alpha$ such that, for $\alpha >0$, the frame elements $Q_\alpha$ are} proportional to the generalized Pauli matrices{, while $Q_0$ is} proportional to $\id_d-\sum_{\alpha>0} Q_\alpha$; {see Eq.~\eqref{eq:sys_operators_qubits} for an explicit construction of the frame for the case in which the system is a qubit.
The number of terms in the sum is bounded by $d^2$, where $d=\dim\mathscr H_S$ \cite{Smirne-OPD}.}

The reduced system state is obtained as
\begin{equation}
    \label{eq:OPD_reduced_state}
    \rho_S = \tr_E\rho_{SE}=\sum_\alpha w_\alpha Q_\alpha,
\end{equation}
with the weights and environmental states implicitly defined as
\begin{equation}
    \label{eq:env_state_OPD}
    w_\alpha\rho_\alpha = \tr_E[(P_\alpha\otimes\id_E)\rho_{SE}],
\end{equation}
where $\{P_\alpha\}_\alpha$ is the dual frame of $\{Q_\alpha\}_\alpha$, {i.e. the frame such that any operator $A$  acting on $\mathscr H_S$ can be written as
\begin{equation}
    A = \sum_\alpha \tr[A P_\alpha]Q_\alpha = \sum_\alpha \tr[A Q_\alpha]P_\alpha.
\end{equation}
Such dual frame} can be taken to be composed of positive operators such that $\tr[P_\alpha Q_\beta] = \delta_{\alpha \beta}$.
{In this case, it can be written as \cite{Paz-Silva-B+}
\begin{equation}
    \label{eq:dual_frame}
    P_\alpha = \sum_\beta M_{\alpha\beta}Q_\beta, \qquad M = (T^\top T)^{-1},
\end{equation}
where $T$ is a $d^2\times d^2$ matrix with coefficients $T_{\alpha\beta}=\tr[Q_\alpha G_\beta]$, and $\{G_\beta\}_\beta$ is an orthonormal basis of Hermitian operators on $\mathscr H_S$.}
{Notice that the weights and states of Eq.~\eqref{eq:env_state_OPD} depend not only on the frame $\{Q_\alpha\}_\alpha$ but also on the arbitrary global state $\rho_{SE}$.}
Furthermore, if $\rho_{SE}$ can be written using positive operators $Q_\alpha$ also on the system side, then it is separable \cite{Bengtsson2006}; if, additionally, $Q_\alpha$ or $\rho_\alpha$ (or both) are orthogonal projectors, then $\rho_{SE}$ is zero-discord \cite{Ollivier2001,Henderson2001,Modi2012}.

Eq.~\eqref{eq:OPD_state} allows us to write the time evolution of $\rho_S$ as the weighted sum of CPTP maps
\begin{equation}
    \label{eq:OPD_time_evolved}
    \rho_S(t) = \sum_\alpha w_\alpha\Phi^\alpha_t\left[Q_\alpha\right],
\end{equation}
{where the dynamical maps}
\begin{equation}
    \label{eq:OPD_maps}
    \Phi^\alpha_t[\cdot]\coloneqq\tr_E\left[U(t)(\cdot\otimes\rho_\alpha)U^\dagger(t)\right]
\end{equation}
are guaranteed to be CPTP since $\rho_\alpha\in\mathcal S(\mathscr H_E)${, and for each term of the sum system and environment are factorized.}


With the same set of CPTP maps $\{\Phi^\alpha_t\}$, it is possible to obtain not only the dynamics of $\rho_{SE}$ but also the dynamics of all states obtainable from it via system-side operations.
Let $\mathcal R$ be a completely-positive operation on $\mathcal S(\mathscr H_S)$ and
\begin{equation}
    \label{eq:system_repreparation}
    {\rho_{SE}^{\mathcal R}\coloneqq \frac{(\mathcal R\otimes\operatorname{id})\rho_{SE}}{\tr\left[(\mathcal R\otimes\operatorname{id})\rho_{SE}\right]},}
\end{equation}
then the time evolution of $\rho_S^{\mathcal R} = \tr_E\rho_{SE}^{\mathcal R}$ is given by
\begin{equation}
    \label{eq:OPD_repreparation}
    \rho_S^{\mathcal R}(t) = \frac{\sum_{\alpha,\alpha^\prime}w_\alpha R_{\alpha,\alpha^\prime}\Phi^\alpha_t\left[Q_{\alpha^\prime}\right]}{\sum_\alpha w_\alpha\tr\mathcal R\left[Q_\alpha\right]},
\end{equation}
where $R_{\alpha,\alpha^\prime}$ is an expansion of $\mathcal R$ in the basis $\{Q_\alpha\}$, i.e. $\mathcal R[Q_\alpha] = \sum_{\alpha,\alpha^\prime}R_{\alpha,\alpha^\prime}Q_{\alpha^\prime}$.
Therefore, with up to $d^2$ maps $\{\Phi^\alpha_t\}$, it is possible to describe the reduced dynamics starting from all system-environment states of the form \eqref{eq:system_repreparation}.
The set of {such global states} has dimension up to $(d^4-1)$, with the maximum obtained if $\rho_{SE}$ is maximally entangled \cite{Paz-Silva-B+}.
For instance, if the initial state is
\begin{equation}
    \label{eq:max_ent_state}
    \ket{\Psi_{SE}} = \frac1{\sqrt d}\sum_{k=0}^{d-1}\ket k\otimes\ket k,
\end{equation}
then all the maximally entangled generalized Bell states \cite{Bennett1993}
\begin{equation}
    \label{eq:gen_Bell_states}
    \ket{\Psi_{n,m}} = \frac1{\sqrt d}\sum_{k=0}^{d-1}e^{2\pi i k n/d}\ket{k\oplus m}\otimes\ket k
\end{equation}
with $n,m = 0,\ldots,d-1$, and $k\oplus m = k+m\mod d$, can be obtained via system side repreparations
\begin{equation}
    \ket k\mapsto e^{2\pi i k n/d}\ket{k\oplus m}.
\end{equation}
Also, all zero discord states $\rho^{\mathcal R_0}_S = \sum_{k=0}^{d-1}p_k\ketbra k\otimes\ketbra k$ can be obtained via the repreparation $\mathcal R_0[\rho] = \sum_{k=0}^{d-1}p_k\braket{k\vert\rho\vert k}\ketbra k$.
The price one has to pay in order to do so is to compute the $d^2\times d^2$ matrix $R_{\alpha,\alpha^\prime}$.
Nevertheless, computing this matrix is an arguably less complicated task than directly computing the time evolution of all states as in Eq.~\eqref{eq:system_repreparation}.

\subsection{Fixed correlations approach}
\label{subsec:single_ME_approach}
Another possible way to deal with initially correlated system and environment is to start by fixing the environmental state $\rho_E$ and correlations $\chi$, with $\chi=\chi^\dagger$, $\tr\chi=0$ \cite{Colla2022}.
The global state reads
\begin{equation}
    \label{eq:rho_SE_corr}
    \rho_{SE} = \rho_S\otimes\rho_E + \chi,
\end{equation}
and the reduced evolved state can then be written as
\begin{gather}
    \label{eq:dynamics_single_ME}
    \rho_S(t) = \Phi^\chi_t[\rho_S] \coloneqq \Phi_t[\rho_S] + I^\chi_t,\\
    \Phi_t = \tr_E\left[U(t)\rho_S\otimes\rho_EU^\dagger(t)\right],\quad
    I_t^\chi = \tr_E\left[U(t)\chi U^\dagger(t)\right],
\end{gather}
where $\Phi_t$ is the CP map describing the uncorrelated evolution, with the initial correlations entirely described by $I_t^\chi$.
From here, it is possible to derive a single master equation
\begin{equation}
    \label{eq:master_eq_single_ME}
    \mathcal L^\chi_t[X] \coloneqq \dot\Phi^\chi_t \circ \left(\Phi^\chi_t\right)^{-1}[X] = \mathcal L_t[X] + \Delta^\chi_t \tr X,
\end{equation}
where $\mathcal L_t = \dot\Phi_t\circ\Phi_t^{-1}$ is the generator of the uncorrelated dynamics, and
\begin{equation}
    \Delta^\chi_t = \dot I^\chi_t-\mathcal L_t[I^\chi_t]
\end{equation}
is the correlated part.
It is always possible to put $\Delta^\chi_t$ in Lindblad form
\begin{equation}
    \label{eq:Lindblad_form_correlated_ME}
    \Delta^\chi_t\tr X = \sum_i \eta_i(t)\left[J_i(t) X J_i^\dagger(t)-\frac12\left\{J_i^\dagger(t)J_i(t),X\right\}\right],
\end{equation}
where 
\begin{gather}
    J_i(t) = G_i(t) - \frac1d\tr[G_i(t)]\id_d,\\
    G_i(t) = \ket{\xi_j(t)}\bra{\xi_{j^\prime}(t)},\quad \eta_i(t) = b_j(t),
\end{gather}
$i=\{j,j^\prime\}$ is a double index and $\ket{\xi_j(t)}$ and $b_j(t)$ are (respectively) the eigenvectors and eigenvalues of $\Delta^\chi_t${, i.e. \begin{equation}
    \Delta^\chi_t = \sum_j b_j(t)\ketbra{\xi_j(t)}.
\end{equation}}
Since the master equation \eqref{eq:master_eq_single_ME} can always be put in Lindblad form, then {all} unraveling methods can be applied in a straightforward way.

The resulting master equation \eqref{eq:master_eq_single_ME}, however, {does not describe the evolution of the whole set of quantum states} $\mathcal S(\mathscr H)$, but only of a limited subset, depending on both $\rho_E$ and $\chi$ and containing all states $\rho\in\mathcal S(\mathscr H_S)$ such that
\begin{equation}
    \label{eq:physical_domain_single_ME}
    \rho\otimes\rho_E+\chi\ge0.
\end{equation}
This subset can even contain only a single state, as it happens for example if $\chi$ and $\rho_E$ are the correlations and environmental state corresponding to a maximally entangled system environment state.
For further details see Appendix \ref{app:comp_max_ent_max_mix}.

It is worth noticing that, since $\Delta_t^\chi$ is traceless, then the sum of the rates of the correlated generator $\eta_i(t)$ must also equal zero
\begin{equation}
    \sum_i\eta_i(t) {= \sum_j b_j(t)} = \tr\Delta^\chi_t = 0,
\end{equation}
and therefore at least one rate must be negative at all times.
Therefore, even if the uncorrelated generator $\mathcal L_t$ has all positive rates, it might happen that the addition of the correlated term gives rise to some negative rates for $\mathcal L^\chi_t$, thus making the dynamics more difficult and computationally expensive to unravel.

For example, if one considers $\chi$ and $\rho_E$ corresponding to a maximally entangled state and a global unitary evolution $U(t)$ giving a unital uncorrelated reduced evolution, i.e. $\Phi_t[\id]=\id$, then the corresponding dynamics has a negative rate already at $t=0$.
This happens because the generator at $t=0$ is simply
\begin{equation}
    \mathcal L_0^\chi[\rho_S(0)] = \Delta^\chi_0,
\end{equation}
since $\rho_S(0) = \id_d/d$ and $\mathcal L_t[\id] = 0$ because of unitality of $\Phi_t$.
In this case, not only the correlated generator $\Delta_t^\chi$ has some negative rates at $t=0$, but also the total generator $\mathcal L^\chi_t$ does. The fact that such negativity is present since $t=0$ not only makes the unravelings computationally more expensive, but might also cause the failure of the reverse jumps.
An explicit example of such kind of dynamics will be discussed in Sec.~\ref{subsec:dephasing}.

For these reasons, as well as the fact that unravelings can be {directly} applied in this formalism, in the following we will only focus on the OPD formalism, for which unravelings cannot be applied in a straightforward way since the initial condition of the master equation is not a density matrix.
{Notice that recently a Keldysh-contour based (two-state) unraveling that can deal with both uncorrelated and correlated initial global states has been introduced \cite{Cavina2025}.}


\section{Unravelings with One-sided positive decomposition}
\label{sec:unr_init_corr}

When system and environment are initially uncorrelated, it is possible to derive a single generator $\mathcal L_t$ in Lindblad form, valid for all open system initial states.
Starting from such a generator, it is possible to approximate the dynamics by applying the unraveling techniques to it.
So far, however, such stochastic techniques have only been applied to uncorrelated system and environment.
In this Section, we provide a way to apply them also to cases in which initial correlations are present, by exploiting the OPD formalism.

\subsection{Unravelings for non-positive operators}
\label{subsec:unr-non-pos}
\begin{figure}
    \centering
    \includegraphics[width=\linewidth]{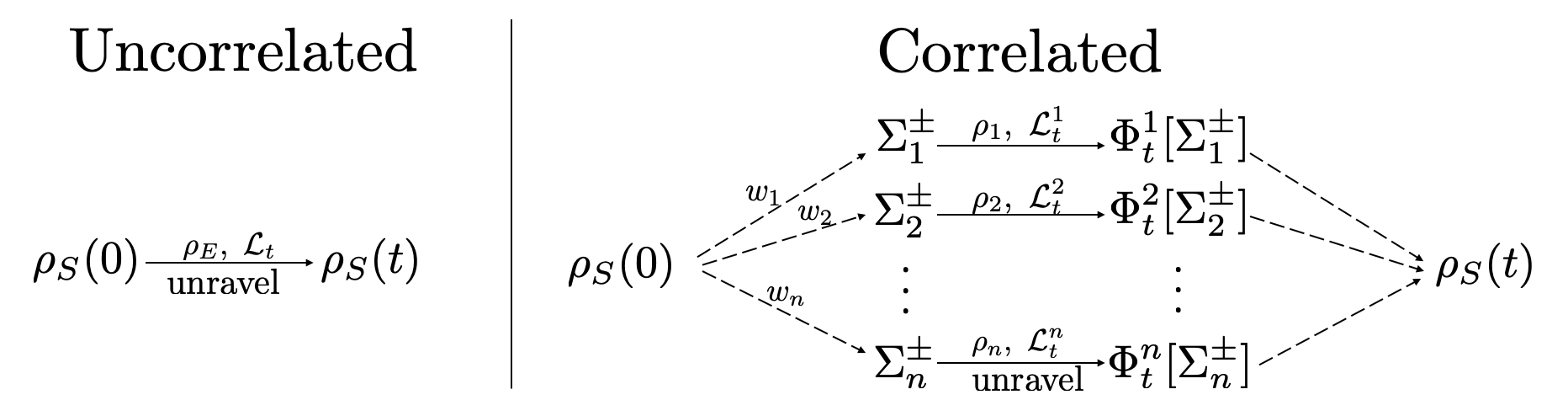}
    \caption{{For the uncorrelated case, a single Lindblad master equation, depending on $\rho_E$, is sufficient to describe the reduced evolution.
    If initial correlations are present, then one needs a set of up to $d^2-1$ master equations, one for each term of Eq.~\eqref{eq:OPD_state}, to describe the reduced evolution.
    Such master equations act on non-positive operators $Q_\alpha$, and therefore cannot be directly unraveled.
    However, as discussed in Sec.~\ref{sec:unr_init_corr}, they can be reconstructed from positive operators $\Sigma_\alpha^\pm$, which can be unraveled.}}
    \label{fig:OPD_repr}
\end{figure}

{For initially uncorrelated system environment, the reduced dynamics is described by a single Lindblad master equation $d\rho/dt = \mathcal L_t[\rho]$, depending on the environmental state $\rho_E$.
If, instead, system and environment are initially correlated, it is not possible to derive a single master equation.
However, the OPD \eqref{eq:OPD_state} allows one to write the correlated state $\rho_{SE}$ as a sum of uncorrelated terms $Q_\alpha\otimes\rho_\alpha$ with weights $w_\alpha$.
Since each term is uncorrelated and with a state $\rho_\alpha$ on the environmental side, the associated reduced operator will evolve according to a Lindblad master equation
\begin{equation}
    \label{eq:OPD_generators}
    \frac{d}{dt}Q_\alpha(t) = \mathcal L_t^\alpha\big[Q_\alpha(t)\big],\qquad
    \mathcal L^\alpha_t = \dot\Phi^\alpha_t\circ\Big(\Phi^\alpha_t\Big)^{-1},
\end{equation}
where the generator $\mathcal L^\alpha_t$ depends on the corresponding environmental state $\rho_\alpha$.
Since Eq.~\eqref{eq:OPD_state} contains up to $d^2-1$ terms, then to describe the reduced dynamics one needs up to $d^2-1$ generators $\{\mathcal L^\alpha_t\}_{\alpha=1}^{d^2-1}$, see Fig.~\ref{fig:OPD_repr}.
The maps $\Phi^\alpha_t$ of Eq.~\eqref{eq:OPD_maps} are the solutions of the Lindblad master equations generated by $\mathcal L^\alpha_t$.}

{When applying unravelings, another crucial difference between the two scenarios is that the master equation for the initially uncorrelated case has a state $\rho(0)$ as initial condition, and therefore its time evolution can be described as a convex mixture of pure states as in Eq.~\eqref{eq:avg_unravelings}.
When correlations are present, instead, each generator $\mathcal L^\alpha_t$ has a non-positive (but self-adjoint) operator $Q_{\alpha}$ as initial condition, which cannot be written as a convex mixture of pure states.}
Nevertheless, it is possible to reconcile stochastic unravelings and OPD by noticing that the system-side operators $Q_{\alpha}$ can always be written as the weighted difference of two states
\begin{equation}
    \label{eq:Q_pos_neg_part}
    Q_\alpha = Q_\alpha^+-Q_\alpha^- = \mu_\alpha^+\Sigma_\alpha^+-\mu_\alpha^-\Sigma_\alpha^-,
\end{equation}
where $Q_\alpha^\pm\ge0$ are the positive and negative part of $Q_\alpha$
\begin{equation}
    \label{eq:pos_neg_part_definition}
    Q_\alpha^\pm = \frac12\left(\abs{Q_\alpha}\pm Q_\alpha\right),\quad \abs X = \sqrt{X^\dagger X},
\end{equation}
$\mu_\alpha^\pm = \tr Q_\alpha^\pm$, and $\Sigma_\alpha^\pm = Q_\alpha^\pm/\mu_\alpha^\pm$.
It is straightforward to verify that the $\Sigma_\alpha^\pm$ defined this way are indeed self-adjoint, positive, trace-one operators, i.e. $\Sigma_\alpha^\pm\in\mathcal S(\mathscr H_S)$.
Furthermore, $\Sigma_\alpha^\pm$ do not depend on $\rho_{SE}$, but only on the fixed operators $Q_\alpha$.

Then it is possible to unravel the $\mathcal L_\alpha$ with $\Sigma_\alpha^\pm$ as initial conditions in the same way as for the uncorrelated case{, since the states $\Sigma_\alpha^\pm$ can be written as convex mixture of pure states}, thus obtaining $\Phi^\alpha_t[\Sigma_\alpha^\pm]$.
Linearity and positivity of $\Phi^\alpha_t$ ensure that this way of writing $Q_\alpha$ is preserved by the time evolution
\begin{equation}
    \label{eq:Phi_pos_neg_part}
    Q_\alpha(t) = \Phi^\alpha_t[Q_\alpha] = \mu_\alpha^+\Phi^\alpha_t[\Sigma_\alpha^+] - \mu_\alpha^-\Phi^\alpha_t[\Sigma_\alpha^-].
\end{equation}
{Therefore, unraveling with the states $\Sigma^\pm_\alpha$ as initial conditions allows to reconstruct the time evolution of $Q_\alpha$.}
The evolution of the system state can then be obtained as
\begin{equation}
    \label{eq:rho(t)_unr}
    \rho_S(t) = \sum_\alpha w_\alpha \mu_\alpha^+\Phi^\alpha_t[\Sigma_\alpha^+] - w_\alpha\mu_\alpha^-\Phi^\alpha_t[\Sigma_\alpha^-].
\end{equation}
Therefore, it is possible to obtain the dynamics of the marginal state $\rho_S$ of the initially correlated global state by unraveling each of the generators $\mathcal L^\alpha_t$, with initial states $\Sigma_\alpha^+$ and $\Sigma_\alpha^-$, and then recombining them with weights $\pm w_\alpha\mu_\alpha^\pm$.

In the literature, unraveling methods relying on additional degrees of freedom \cite{Imamoglu-stochastic, Garraway-nonperturbative-decay, Garraway-decay-atom-coupled, Breuer-double-Hilbert, Breuer-trajectories-nM, Becker2023} or temporarily negative probabilities for the occupation of certain states \cite{Muratore-Ginanneschi-influence-martingale} have been introduced.
These methods can also be applied to the generators obtained via the OPD.

The use of unraveling techniques is consistent with the system-side repreparations of Eq.~\eqref{eq:OPD_repreparation}.
The time evolution of any state $\rho_S^{\mathcal R} = \tr_E\left[(\mathcal R\otimes\operatorname{id})\rho_{SE}\right]$,
obtained from $\rho_{SE}$ via a system-only operation $\mathcal R$, can be written as
\begin{equation}
    \label{eq:sys_repreparations_unraveling}
    \rho_S^{\mathcal R}(t) = \frac{\sum_{\alpha,\alpha^\prime}w_\alpha R_{\alpha,\alpha^\prime}\left(\mu_{\alpha^\prime}^+\Phi^\alpha_t\left[\Sigma_{\alpha^\prime}^+\right] - \mu_{\alpha^\prime}^-\Phi^\alpha_t\left[\Sigma_{\alpha^\prime}^-\right]\right)}{\sum_\alpha w_\alpha\tr\mathcal R\left[Q_\alpha\right]}.
\end{equation}
Therefore, from unraveling the master equations $\mathcal L^\alpha_t$ with initial states $\Sigma^\pm_{\alpha^\prime}$, one can obtain the reduced dynamics of all states obtainable via local repreparations on the system only from a reference initial system-environment state $\rho_{SE}$.

{Notice that our method can be applied to arbitrarily dimensional systems, with the number of master equations to be considered that scales quadratically with the dimension of the system's Hilbert space, as it follows from the OPD technique of Sec.~\ref{sec:OPD}.
If one considers infinite dimensional systems, then one must necessarily truncate the dimensionality in order to apply any computational method, but this fact is not peculiar to our method.
Furthermore, the dimensionality of the environment doesn't play any role, since the number of master equation and the frame $\{Q_\alpha\}$ only depend on the system.
Therefore, our method can be applied in a straightforward way to infinite-dimensional or structured environments.
}

{For the sake of example, we now provide {one possible choice for a frame} $\{Q_\alpha\}_\alpha$, as well as the environmental states $\rho_\alpha$ for the special} case in which the system is a qubit $\mathscr H_S=\mathbb C^2$, without making any assumption on the dimension of $\mathscr H_E$.
One possible choice of frame for the system is given by
\begin{equation}
    \label{eq:sys_operators_qubits}
    Q_0 = \frac{\id-\sum_i\sigma_i}2,\qquad Q_i = \frac12\sigma_i,
\end{equation}
where $\sigma_i$, $i=x,y,z$, are the Pauli matrices, leading to a positive frame for the environment
\begin{equation}
    \label{eq:env_states_qubit}
    \rho_0 = \tr_S\rho_{SE},\quad
    \rho_i = \frac1{w_i}\tr_S\left[((\id+\sigma_i)\otimes\id_E)\rho_{SE}\right]
\end{equation}
with weights
\begin{equation}
    w_0 = 1,\qquad w_i = \tr_{SE}\left[((\id+\sigma_i)\otimes\id_E)\rho_{SE}\right].
\end{equation}
{The corresponding dual frame $\{P_\alpha\}_\alpha$ of Eq.~\eqref{eq:dual_frame} reads
\begin{equation}
    P_0 = \id,\qquad P_i = \id + \sigma_i.
\end{equation}}
This choice of a system frame can be generalized in a straightforward way to higher dimensional systems by considering the generalized Pauli matrices instead of the $\sigma_i$.

From Eq.~\eqref{eq:sys_operators_qubits}, it is possible to obtain the states $\Sigma_\alpha^\pm$ and weights $\mu_\alpha^\pm$ of Eq.~\eqref{eq:Q_pos_neg_part} in a straightforward way.
For $\alpha=x,y,z$, one has $\mu^\pm_x = \mu^\pm_y = \mu^\pm_z = 1$ and the corresponding states are simply the positive and negative eigenstates of the Pauli matrices
\begin{gather}
    \sigma_x = \ketbra+-\ketbra-,\quad \sigma_y = \ketbra{+_y}-\ketbra{-_y},\\
    \sigma_z = \ketbra1-\ketbra0.
\end{gather}
For $\alpha = 0$, the expression of the states $\Sigma_0^\pm$ instead reads
\begin{gather}
	Q_0 = \mu_0^+\ketbra{\phi^+_0}-\mu_0^-\ketbra{\phi^-_0},\quad\mu_0^\pm = \frac{\sqrt3\pm1}2,\\
	\ket{\phi^\pm_0} = \frac{\pm\sqrt3-1}{2\sqrt{3\mp\sqrt3}}(i-1)\ket1 + \frac1{\sqrt{3\mp\sqrt3}}\ket0.
\end{gather}

The explicit form of the environmental states \eqref{eq:env_states_qubit}, and therefore of the generators $\mathcal L^\alpha_t$, will depend on the full system-environment state.
Let us suppose that the initial state is entangled and of the form
\begin{equation}
    \ket{\Psi_{SE}} = \frac{\ket0\otimes\ket{\psi_0}+\ket1\otimes\ket{\psi_1}}{\sqrt2},
\end{equation}
with $\ket{\psi_{0,1}}$ not necessarily orthogonal.
{Let us stress the fact that we are not making any assumptions on the dimensionality of the environment and it can even be infinite-dimensional.}
The environmental states of the OPD are
\begin{gather}
    \label{eq:env_states_ME_1}
    \rho_0 = \frac12\left(\ketbra{\psi_0}+\ketbra{\psi_1}\right),\quad
    \rho_x = \ketbra{\phi_x},\\
    \label{eq:env_states_ME_2}
    \rho_y = \ketbra{\phi_y},\quad\rho_z = \ketbra{\psi_1},
\end{gather}
with 
\begin{gather}
    \ket{\phi_x} = \frac{\ket{\psi_0}+\ket{\psi_1}}{\sqrt{N_x}},\quad N_x = 2\left(1+\Re\braket{\psi_0\vert\psi_1}\right),\\
    \ket{\phi_y} = \frac{\ket{\psi_0}+i\ket{\psi_1}}{\sqrt{N_y}},\quad N_y = 2\left(1-\Im\braket{\psi_0\vert\psi_1}\right),
\end{gather}
with weights
\begin{equation}
    w_0 = w_z = 1,\quad w_{x,y} = \frac{N_{x,y}}2.
\end{equation}

The dynamics of $\Phi^{0,z}_t$ can be unraveled by mixing the dynamics obtained from the generators obtained with $\ket{\psi_i}$ as initial states.
For $\Phi^{x,y}_t$ this is indeed no longer the case and one needs to simulate the dynamics with different initial environmental states.

\begin{figure*}
    \centering
    \includegraphics[width=\linewidth]{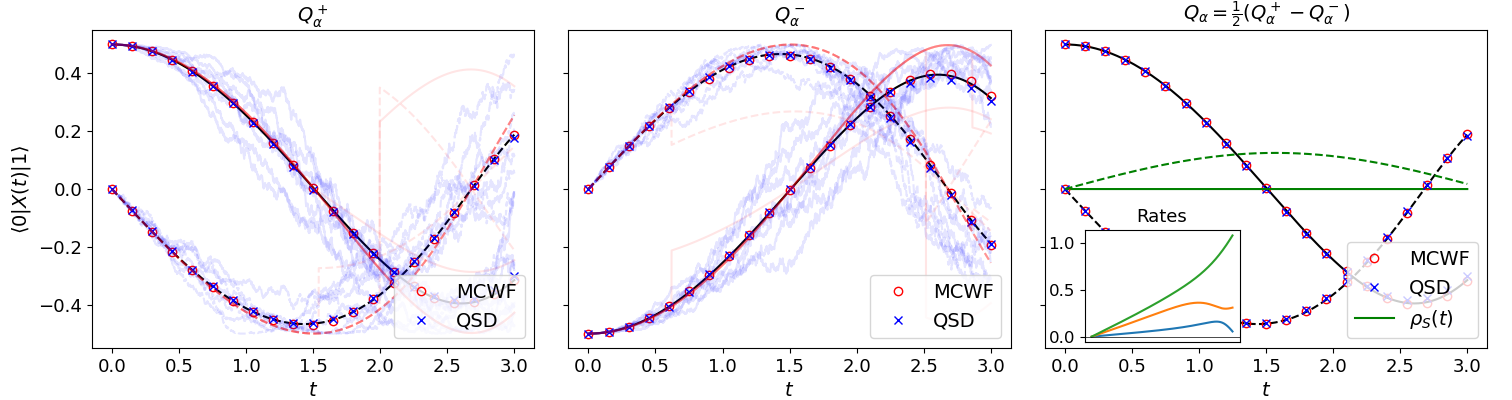}
    \caption{Unravelings of $\Phi^x_t(Q_x)$ for the dephasing dynamics of Eqs.~\eqref{eq:dephasing_Hamiltonian_d=4_free}-\eqref{eq:dephasing_Hamiltonian_d=4_coupling}; showing the real (solid) and imaginary (dashed) components of $\braket{0\vert \Phi^x_t(Q_x)\vert1}$.
    Left: unraveling of the positive part $\Sigma_x^+$.
    Middle: unraveling of the negative part $\Sigma_x^-$.
    Right: unraveling of $Q_x$, obtained as the difference between the two, and dynamics of the coherence of $\rho_S(t)$ (green).
    Inset: dephasing rates, i.e. eigenvalues of $K_{k\ell}^x(t)$.
    The unravelings are done using MCWF (red lines and circles) and QSD (blue lines and crosses) and for both 6 stochastic trajectories {(piecewise deterministic using MCWF and diffusive using QSD)} are shown in lighter shades.
    Parameters: $g=0.5$, $10^3$ stochastic realizations have been used.}
    \label{fig:dephasing}
\end{figure*}

\subsection{Example: dephasing dynamics}
\label{subsec:dephasing}

As a first example of the application of the unravelings to the OPD, let us consider an exactly solvable dephasing dynamics for a $d$-dimensional system \cite{Chruscinski2022}.
They are a class of dynamics such that the populations are preserved
\begin{equation}
    \frac d{dt}\braket{k\vert\Phi_t[\rho]\vert k} = 0\qquad\forall k=1,\ldots,d,
\end{equation}
while coherences are not.
This is obtained by considering system-environment Hamiltonians of the form
\begin{equation}
    \label{eq:dephasing_Hamiltonian}
    H = H_S\otimes\id_E + \id_S\otimes H_E + \sum_{k=1}^d \ketbra k\otimes B_k,
\end{equation}
where $H_S = \sum_k E_k\ketbra k$, and $B_k=B_k^\dagger$ are arbitrary environmental operators defining the coupling.
For this model, both the maps $\Phi^\alpha_t$ and the generators $\mathcal L^\alpha_t$ of Eqs.~\eqref{eq:OPD_maps}, \eqref{eq:OPD_generators} can be analytically calculated.
The master equations read
\begin{equation}
    \label{eq:dephasing_ME}
    \mathcal L^\alpha_t[X] = -i[H_\alpha(t),X] + \sum_{k,\ell=1}^{d-1}K_{k\ell}^\alpha(t)\left(S_k X S_\ell-\frac12\{S_\ell S_k,X\}\right),
\end{equation}
where
\begin{equation}
    \label{eq:S_dephasing}
    S_\ell = \frac1{\sqrt{\ell(\ell+1)}}\left(\sum_{k=0}^{\ell-1}\ketbra k - \ell \ketbra\ell\right),
\end{equation}
and $K_{k\ell}^\alpha(t)$ is a self-adjoint matrix which depends on the environmental state $\rho_\alpha$.
Eq.~\eqref{eq:dephasing_ME} can be put in Lindblad form by diagonalizing the matrix $K_{k\ell}^\alpha(t)$.
{This class of dephasing dynamics contains as a special case the spin boson dephasing in the presence of initial correlations \cite{Chaudhry2013, Mirza2021}.
Our method, being independent of the particular system-environment, can be applied also to this special case.}


{We now proceed to apply stochastic unravelings for this type of dynamics.}
For the sake of example, we now fix $d=4$, $\mathscr H_E = \mathscr H_S = \mathbb C^4$, and
\begin{gather}
    \label{eq:dephasing_Hamiltonian_d=4_free}
    H_S = H_E = \Omega\sum_{k=1}^4k\ketbra k\\
    B_k = g\Big(\ket k\bra{k+1}+\ket{k+1}\bra{k}\Big),\qquad k=1,2,3,
    \label{eq:dephasing_Hamiltonian_d=4_coupling}
\end{gather}
and $B_4=0$.
As initial state, let us consider the maximally entangled state
\begin{equation}
    \label{eq:init_state_dephasing}
    \ket{\Psi_{SE}} = \frac{\ket{1,1}+\ket{2,2}+\ket{3,3}+\ket{4,4}}{2}.
\end{equation}

The system-side frame $Q_\alpha$ can be obtained similarly to the qubit case of Eq.~\eqref{eq:sys_operators_qubits} as
\begin{equation}
    \label{eq:frame_d=4}
    Q_0 = \frac14\id_4-\frac12\sum_\alpha\sigma_\alpha,\qquad Q_\alpha = \frac12\sigma_\alpha,
\end{equation}
where the $\sigma_\alpha$ are the generalized $4\times4$ Pauli matrices.
The corresponding environmental states $\rho_\alpha$ can be found in Appendix \ref{app:OPD_d=4}.
{The unravelings can therefore be applied to the generators $\mathcal L^\alpha_t$ with initial conditions the system-side states $\Sigma^\pm_\alpha$ obtained from the positive and negative part of the operators $Q_\alpha$ of Eq.~\eqref{eq:frame_d=4} via Eq.~\eqref{eq:Q_pos_neg_part}.
The $Q_\alpha(t)$ are then reconstructed via Eq.~\eqref{eq:Phi_pos_neg_part} and the reduced dynamics via Eq.~\eqref{eq:rho(t)_unr}.}

With this choice, all the resulting maps $\Phi^\alpha_t$ are CP divisible up to some time, and therefore the unravelings can be performed using the MCWF and the QSD techniques up to that time.
The unraveling of $\mathcal L^x_t$, in the interaction picture with respect to the free Hamiltonian $H_S+H_E$, with initial conditions {the states $\Sigma_x^\pm$, corresponding to the frame element $Q_x$, are shown in the left and middle panel of Fig.~\ref{fig:dephasing}.
In the right panel, the dynamics of $Q_x$ is obtained from the dynamic of $\Sigma_x^+$ and $\Sigma_x^-$ by combining them with appropriate weights.}
The code used for obtaining the unravelings is available in \cite{github}.

{If one neglects the correlations and simply considers the dynamics of $\rho_S\otimes\rho_E$, then the resulting time evolution is trivial, since $\rho_S=\id/4$ and the dynamics is unital.
However, when initial correlations are taken into account, there is a revival in coherence $\braket{0\vert\rho_S(t)\vert1}$, as can be seen in the right panel of Fig.~\ref{fig:dephasing}.}

If one performs the unravelings of all $\mathcal L^\alpha_t$ with initial conditions $Q_{\alpha^\prime}$ then, via Eq.~\eqref{eq:OPD_repreparation}, it is possible to describe a subspace of initial states with dimension $d^4-1=255$.
This subspace contains, for example, all the generalized Bell states of Eq.~\eqref{eq:gen_Bell_states}, as well as all zero-discord states $\rho^{\mathcal R_0}_S = \sum_{k=0}^3p_k\ketbra k\otimes\ketbra k$.

It is worth stressing that if one describes the same dynamics with the fixed correlations approach of Sec.~\ref{subsec:single_ME_approach}, then the resulting dynamics is P indivisible since $t=0$.
This happens because the dephasing is unital and therefore the resulting master equation has at least one negative rate, as discussed in Sec.~\ref{subsec:single_ME_approach}, and therefore is CP indivisible.
But, since for the dephasing P and CP divisibility coincide  \cite{Chruscinski2022}, then it is also P indivisible since $t=0$.
This fact causes not only the MCWF to fail, but also the $\Psi$-RO.

\begin{figure*}
    \centering
    \includegraphics[width=\linewidth]{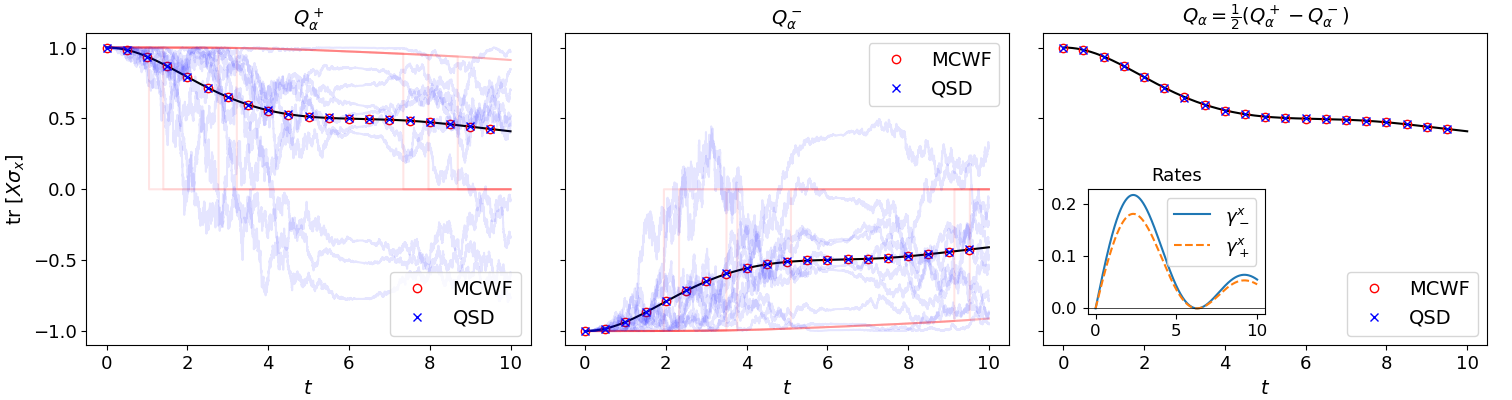}
    \caption{Unraveling of $\Phi^x_t[Q_x]$ for the Jaynes-Cummings dynamics in the continuum limit, for the maximally initial state of Eq.~\eqref{eq:JC_cont_init_state}.
    Left: unravelings for the positive part of $Q_x$; middle: for the negative part; right: $Q_x$ is obtained as the difference between the two.
    The unravelings are done both with the MCWF (red circles) and QSD (blue crosses), and 10 trajectories are shown.
    Inset: rates $\gamma^x_\pm$ for $\mathcal L^x_t$ of Eq.~\eqref{eq:ME_JC_APO}. Parameters: $g=0.05$, $N=10$.
    $10^4$ stochastic realizations have been used.}
    \label{fig:JC_APO_continuum}
\end{figure*}

\section{Unravelings with the adapted projection operator technique}
\label{sec:APO}
In order to apply stochastic unravelings to initially correlated system and environment, one does not need the a priori knowledge of the maps $\Phi^\alpha_t$, but the knowledge of the generators $\mathcal L^\alpha_t$ is sufficient.
{However, the derivation of the exact generators is in general a difficult task.}
In this Section, we recall the adapted projection operator (APO) technique, that allows one to derive second order master equations corresponding to the environmental states $\rho_\alpha$ of the OPD of Eq.~\eqref{eq:OPD_state}, thus drastically simplifying the task of deriving the generators.
We then apply this technique to two examples, performing the unravelings to the obtained master equations.

\subsection{Adapted projection operator technique}
\label{subsec:APO}
The APO technique generalizes the projector operator technique \cite{Breuer-Petruccione} by introducing projection operators of the form \cite{Trevisan-APO}
\begin{equation}
    \mathcal P_\alpha[\cdot] = \tr_E[\cdot]\otimes\rho_\alpha.
\end{equation}
Then, the dynamics of the relevant part $\mathcal P_\alpha[\rho_{SE}(t)]$ for the system side operators $Q_\alpha(t) = \Phi^\alpha_t[Q_\alpha]$ is described by the master equations $d Q_\alpha(t)/dt = \mathcal L^\alpha_t[Q_\alpha(t)]$, with
\begin{equation}
    \label{eq:ME_APO}
    \begin{split}
        \mathcal L^\alpha_t[X] =& -ig\sum_j\left[A_j(t),X\right]\braket{B_j(t)}_{\rho_\alpha}\\
        &-g^2\sum_{j,j^\prime}\int_0^td\tau\,\left[A_j(t),A_{j^\prime}(\tau)X\right]\operatorname{Cov}_{j,j^\prime}^{\rho_\alpha}(t,\tau)\\
        &+g^2\sum_{j,j^\prime}\int_0^td\tau\,\left[A_j(t),X A_{j^\prime}(\tau)\right]\operatorname{Cov}_{j^\prime,j}^{\rho_\alpha}(\tau,t),
    \end{split}
\end{equation}
where the operators $A_j(t)$ are the system-side operators of the interaction Hamiltonian in the interaction picture with respect to the free Hamiltonian
\begin{equation}
    H_I(t) = \sum_j A_j(t)\otimes B_j(t),
\end{equation}
with $A_j(t) = e^{iH_St}A_je^{-iH_St}$ and $B_j(t) = e^{iH_Et}B_je^{-iH_Et}$,
while
\begin{equation}
    \label{eq:APO_covariances}
    \operatorname{Cov}_{j,j^\prime}^{\rho_\alpha}(t,\tau) = \braket{B_j(t)B_{j^\prime}(\tau)}_{\rho_\alpha} - \braket{B_j(t)}_{\rho_\alpha}\braket{B_{j^\prime}(\tau)}_{\rho_\alpha},
\end{equation}
where we use the notation $\braket O_{\rho}= \tr[O \rho]$.

Therefore, by using the APO technique, it is possible to derive in a simple way a set of master equations, one for each component in the sum of Eq.~\eqref{eq:OPD_state}, describing separately the time evolution of each component $Q_\alpha$ of the reduced state \eqref{eq:OPD_reduced_state}.

\subsection{Example: Jaynes-Cummings}
\label{subsec:JC}
As a first example, let us consider a qubit $\mathscr H_S = \mathbb C^2$ interacting with a bosonic environment, with a Jaynes-Cummings form of the interaction
\begin{equation}
    \label{eq:H_JC}
    H = \frac{\omega_0}2\sigma_z+\sum_k\omega_kb_k^\dagger b_k + \sum_k\left(g_k \sigma_+\otimes b_k + g_k^+\sigma_-\otimes b_k^\dagger\right).
\end{equation}
{By applying the APO technique, the corresponding master equations read}
\begin{equation}
    \label{eq:ME_JC_APO}
    \begin{split}
        \mathcal L^\alpha_t[X] =& i \beta_+^\alpha(t)[\sigma_+\sigma_-, X]+i\beta_-^\alpha(t)[\sigma_-\sigma_+, X]\\
        &+\gamma_-^\alpha(t)\left(\sigma_- X \sigma_+ - \frac 12\{\sigma_+\sigma_-,X\}\right)\\
        &+\gamma_+^\alpha(t)\left(\sigma_+ X \sigma_- - \frac 12\{\sigma_-\sigma_+,X\}\right),
    \end{split}
\end{equation}
{with the rates $\gamma_\pm^\alpha$ and the driving $\beta_\pm^\alpha$ that can be found in \cite{Trevisan-APO}.}

\subsubsection{Continuum limit}
As a first example, let us consider an initial state of the form
\begin{equation}
    \label{eq:JC_cont_init_state}
    \ket{\Psi_{SE}} = \frac1{\sqrt 2}\ket0\otimes\ket0 + \frac1{\sqrt 2}\ket1\otimes\ket{\{N_k\}_k},
\end{equation}
where $\ket{\{N_k\}_k}$ is the environmental state with $N_k$ bosons in the mode of frequency $\omega_k$.
The explicit form of the weights $w_\alpha$ and environmental states $\rho_\alpha$ can be found in \cite{Trevisan-APO}.
{Although the APO technique was already used on this example, we stress that unravelings could not be performed because of the non-positivity of the system side operators.
Here, we use the same example and use our results of Sec.~\ref{sec:unr_init_corr} to apply unravelings to the resulting dynamics.}

We {consider infinitely many bath modes in the continuum limit} \cite{Breuer-Petruccione} and an Ohmic spectral density
\begin{equation}
    J(\omega) = g\omega\Theta(\omega_c-\omega),
\end{equation}
where $\Theta$ is the Heaviside theta function and $\omega_c$ a cutoff frequency, and we assume $N$ bosons up to $\omega_c$
\begin{equation}
    N(\omega) = N \Theta(\omega_c-\omega).
\end{equation}
We find that, under a suitable choice of the parameters, the rates $\gamma^\alpha_\pm$ of {the master equation \eqref{eq:ME_JC_APO}} remain positive at all times, as shown in the inset of Fig.~\ref{fig:JC_APO_continuum}.
Because of this, the unraveling methods for Markovian dynamics are sufficient for describing the dynamics.
Indeed, in Fig.~\ref{fig:JC_APO_continuum} we show the unravelings obtained using the MCWF and QSD techniques.
{Like in Sec.~\ref{subsec:dephasing}, we first unravel separately the positive and negative parts $\Sigma^\pm_\alpha$ of the frame element $Q_\alpha$ and then recombine the two to obtain the dynamics of $Q_\alpha$.}
From these unravelings, one is then able to also obtain the dynamics of all initial states obtainable via system-only repreparations according to Eq.~\eqref{eq:sys_repreparations_unraveling}.

\subsubsection{Single mode}

{Let us now consider the case of a single mode with a maximally entangled initial state of the form
\begin{equation}
    \label{eq:JC_single_mode_init_state-ent}
    \ket{\Psi_{SE}} = \frac{\ket0\otimes\ket{n_0} + \ket1\otimes\ket{n_1}}{\sqrt 2}.
\end{equation}
{For this intial state, the rates read}
\begin{gather}
    \gamma^\alpha_+(t) = \abs g^2(n_\alpha+1)\frac{\sin((\omega_0-\omega)t)}{\omega_0-\omega},\\
    \gamma^\alpha_-(t) = \abs g^2n_\alpha\frac{\sin((\omega_0-\omega)t)}{\omega_0-\omega},
\end{gather}
where $n_\alpha = \tr[\hat n\rho_\alpha]$, with $\rho_\alpha$ as in Eqs.~\eqref{eq:env_states_ME_1},~\eqref{eq:env_states_ME_2}.
It is worth noticing that $\Phi^0_t=\Phi^x_t=\Phi^y_t$, since the dynamics depends on $\rho_\alpha$ only via $n_\alpha$ and $n_0=n_x=n_y$.
It is easy to see that the rates are either both positive or both negative at the same time and correspondingly the dynamics is non-Markovian.}
Nevertheless, it is possible to perform the unravelings by using the NMQJ technique with an effective ensemble consisting of only three states: the deterministically evolving initial state $\ket{\psi_{\text{det}}(t)}$ and the eigenstates $\ket0$, $\ket1$ of $\sigma_z$.
The results are shown in the left panel of Fig.~\ref{fig:JC-single-mode-entangled}.

If one, additionally, performs the unravelings also of $\Phi^\alpha_t[Q_{\alpha^\prime}]$, $\alpha^\prime\ne\alpha$, then it is possible to obtain the evolution of all initial states obtainable from $\rho_{SE}$ via system only repreparations according to Eq.~\eqref{eq:sys_repreparations_unraveling}.
This task is drastically simplified by the fact that some maps are actually the same: $\Phi^0_t=\Phi^x_t=\Phi^y_t$.
{For instance, it is possible to recover the reduced dynamics corresponding to a zero discord initial state
\begin{equation}
    \label{eq:JC_single_mode_rep_0_disc}
    \rho_{SE}^{\mathcal R_0^p} = p\ketbra0\otimes\ketbra{n_0} + (1-p)\ketbra1\otimes\ketbra{n_1}
\end{equation}
or to the factorized initial state in which the correlations are eliminated
\begin{equation}
    \label{eq:JC_single_mode_rep_fact}
    \rho_{SE}^{\mathcal R_{\text{f}}} = \frac\id2\otimes\frac{\ketbra{n_0}+\ketbra{n_1}}2.
\end{equation}
The dynamics corresponding to such initial states is shown in the right panel of Fig.~\ref{fig:JC-single-mode-entangled}.
Clearly, the different kinds of correlations play non-trivial roles in modifying the resulting dynamics.}

\begin{figure}
    \centering
    \includegraphics[width=\linewidth]{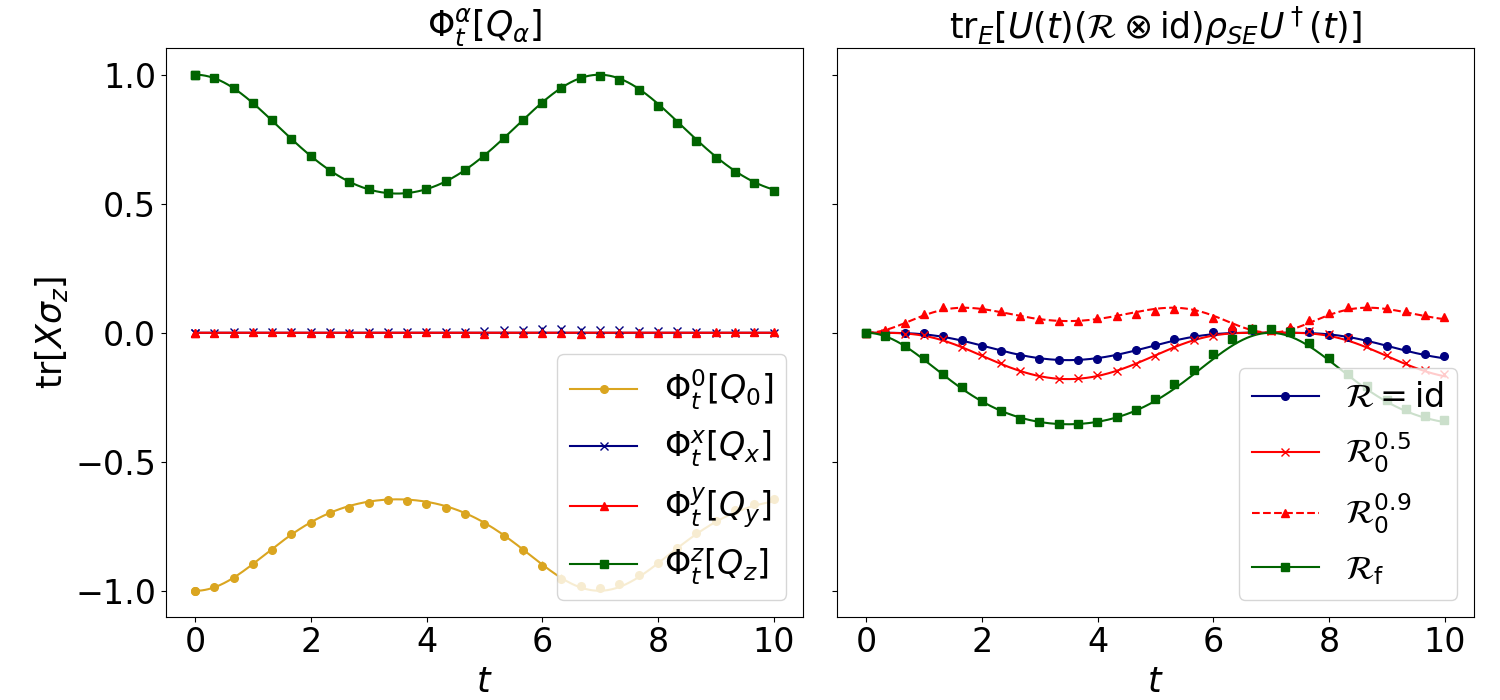}
    \caption{Unravelings of the single-mode Jaynes-Cummings dynamics for the maximally entangled initial state of Eq.~\eqref{eq:JC_single_mode_init_state-ent}, obtained using NMQJ.
    Left panel: $z$ component of the Bloch vector for $\Phi^\alpha_t[Q_\alpha]$.
    {Right panel: reduced dynamics for the maximally entangled state (blue) and of states obtained via the system-side repreparations: zero discord states $\rho_{SE}^{\mathcal R_0^p}$ of Eq.~\eqref{eq:JC_single_mode_rep_0_disc} with $p=0.5$ (red solid), $p=0.9$ (red dashed), factorized initial state $\rho_{SE}^{\mathcal R_{\text{f}}}$ of Eq.~\eqref{eq:JC_single_mode_rep_fact} (green).}
    Parameters: $n_0=1$, $n_1=0$, $\omega_0=1$, $\omega=0.1$, $g=0.5$.
    $10^4$ stochastic realizations have been used.}
    \label{fig:JC-single-mode-entangled}
\end{figure}

\subsection{Example: damped two-qubit model}
\label{subsec:two-qubits}

As a last example, let us consider two interacting qubits, considering one of them as our system and the other as the environment, with the environmental qubit also coupled to an external harmonic oscillator.
The global Hamiltonian reads
\begin{equation}
    \label{eq:H_sigma_xz}
    \begin{split}
        H =& \frac{\omega_1}2\sigma_z\otimes\id\otimes\id + \frac{\omega_2}2\id\otimes\sigma_z\otimes\id + \omega\id\otimes\id\otimes b^\dagger b\\
        &+ g \sigma_x\otimes\sigma_z\otimes\id + \mu\id\otimes\left(\sigma_+\otimes b+\sigma_-\otimes b^\dagger\right),
    \end{split}
\end{equation}
where $b^\dagger$ and $b$ are (respectively) the creation and annihilation operators.
As the initial state, we consider {a maximally entangled state, in which the entanglement is shared between the system qubit and the environmental qubit}, with the harmonic oscillator in the vacuum
\begin{equation}
    {\ket{\Psi_{SE}} = \frac{\ket{000}+\ket{110}}{\sqrt{2}}.}
\end{equation}
If one calculates the master equations with the APO formalism as in Eq.~\eqref{eq:ME_APO}, then the dynamics for $\rho_z$ {only adds a driving of the form}
\begin{equation}
    \label{eq:sigma_xz_ME_z}
    H_z(t) = g\left[\cos(\omega_1 t)\sigma_x-\sin(\omega_1t)\sigma_y\right].
\end{equation}
This happens because the $\rho_z$ covariance of Eq.~\eqref{eq:APO_covariances} is zero, and therefore no terms in $g^2$ appear in the master equation.
The other master equations, instead, {give non-trivial modifications to the Lindbladian
\begin{equation}
    \label{eq:sigma_xz_ME}
    \begin{split}
        \mathcal L^{0,x,y}_t[X] =& -i\left[H(t),X\right] + A(t)X\tilde A(t)\\ &+ \tilde A(t)XA(t)
        -\frac12\left\{\Gamma(t),X\right\},
    \end{split}
\end{equation}}
where
\begin{gather}
    H(t) = \frac{g}{\omega_1}[1-\cos(\omega_1t)]\sigma_z,\quad
    \Gamma(t) = 2\frac{g^2}{\omega_1}\sin(\omega_1t)\id,\\
    A(t) = g\left[\cos(\omega_1t)\sigma_x - \sin(\omega_1t)\sigma_y\right],\\ 
    \tilde A(t) = \int_0^td\tau\, A(\tau) = \frac g{\omega_1}\left[\sin(\omega_1t)\sigma_x - (1-\cos(\omega_1t))\sigma_y\right].
\end{gather}
{Notice that the master equations \eqref{eq:sigma_xz_ME} do not depend on the coupling $\mu$ with the external bath.
This happens because we assumed it to be in the vacuum, and the covariances in the APO formalism all have terms proportional to $\braket{0\vert b\vert 0}$ or $\braket{0\vert b^\dagger\vert 0}$, which are both zero.}

The jump term can be rewritten in the standard Lindblad form
\begin{equation}
    \label{eq:sigma_xz_jump_Lindblad_form}
    \mathcal J_t[X] = A(t)X\tilde A(t) + \tilde A(t)XA(t) = \sum_{i=\pm}\gamma_i(t)L_i(t)XL_i^\dagger(t),
\end{equation}
where $L_\pm(t)$ are a time-dependent combination of $\sigma_x$ and $\sigma_y$, with rates
\begin{equation}
    \label{eq:sigma_xz_rates}
    \gamma_\pm(t) = \sin(\omega_1t)\pm 2\sin\left(\frac{\omega_1t}2\right),
\end{equation}
that are shown in the inset of Fig.~\ref{fig:sigma_xz}.
Importantly, the resulting dynamics has a negative rate since $t=0$ and therefore methods such as the NMQJ fail, but it is possible to unravel the master equation \eqref{eq:sigma_xz_ME} using the $\Psi$-RO formalism.
Additionally, it is possible to have positive unravelings using only three states in the effective ensemble: $\ket{\psi_{\text{det}}(t)}$, $\ket0$, and $\ket1$.
These unravelings are shown in Fig.~\ref{fig:sigma_xz}.
Interestingly, in the $\Psi$-RO formalism, it is not necessary to explicitly compute the standard Lindblad form of Eq.~\eqref{eq:sigma_xz_jump_Lindblad_form}, but the form of Eq.~\eqref{eq:sigma_xz_ME} is enough.

\begin{figure}[t]
    \centering
    \includegraphics[width=\linewidth]{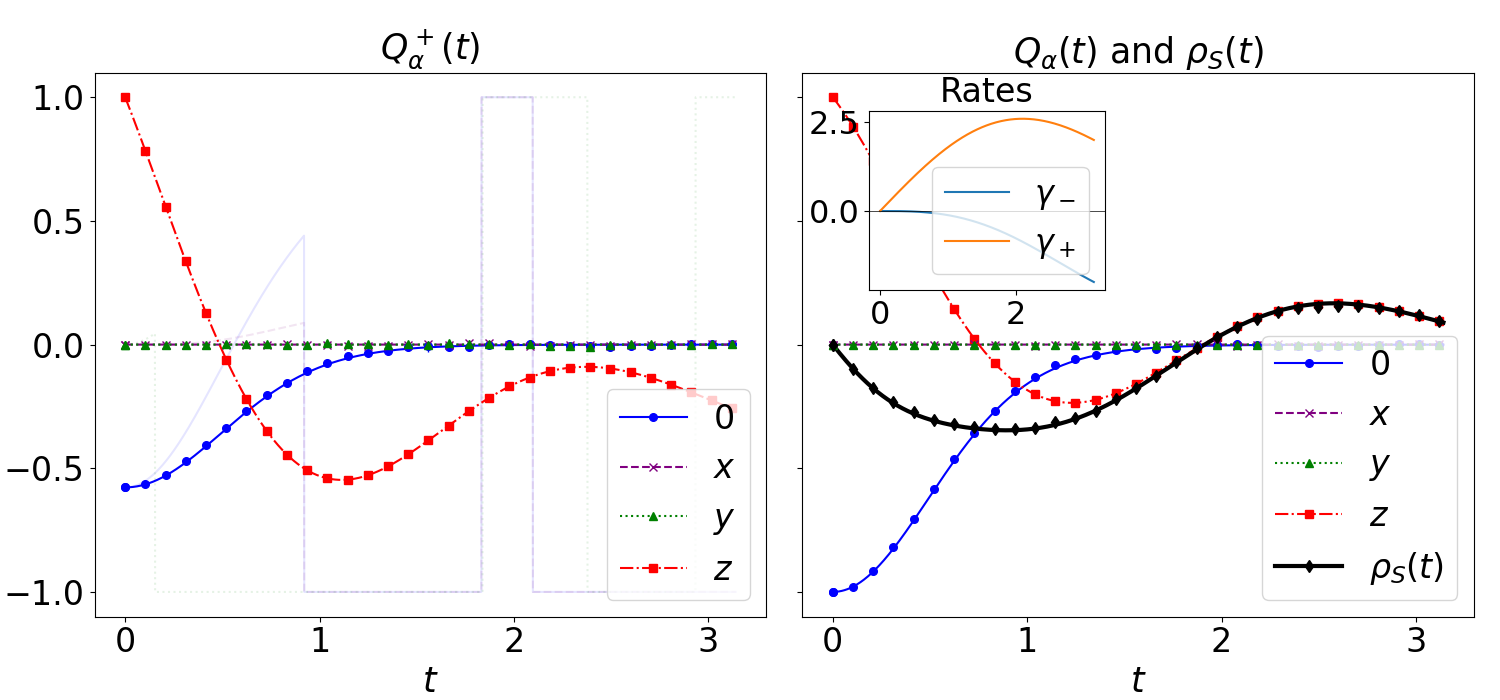}
    \caption{Unravelings of {the system qubit dynamics obtained from} Eq.~\eqref{eq:H_sigma_xz}.
    Left: dynamics of $\Phi^\alpha_t[Q_\alpha^+]$, showing (in lighter shade) 5 trajectories; for $t\le1$, no reverse jumps are needed even if $\gamma_-<0$.
    Right: dynamics of $\Phi^\alpha_t[Q_\alpha]$ and of $\rho_S(t)$.
    Inset: rates $\gamma_\pm$ of Eq.~\eqref{eq:sigma_xz_rates}.
    Parameters: $g=\omega_1=\omega_2 = \omega=\mu=1$.}
    \label{fig:sigma_xz}
\end{figure}

\section{Conclusions}
\label{sec:conclusions}
In this work we have extended the applicability of stochastic unraveling techniques to the most general case of open systems initially correlated with their environment.
As starting point, we considered the OPD formalism for describing the reduced dynamics, where the system density matrix is described by a sum of non-positive operators.
These operators pose challenges for unravelings, as they cannot be expressed as convex mixtures of pure states.
However, we addressed this limitation by decomposing each non-positive operator as the weighted difference of two states and performing the unravelings separately for each state.
Our approach enables the use of both piecewise deterministic and diffusive unravelings even in the most general case of initially correlated system and environment.

We validated our approach with examples, including dephasing and Jaynes-Cummings dynamics, with the master equations obtained either exactly or via the APO technique -- a generalizations of the projector operator technique in the OPD framework.
Different unraveling techniques were used, depending on the presence or absence of negative rates.
Our examples also demonstrated that a small number of unravelings suffice to describe the dynamics of all states obtainable from the initially correlated state via system-only operations.

We have also compared our method based on the OPD with another recently introduced formalism that starts from fixed correlations and allows one to obtain a single master equation valid on a subset of the system states.
We have shown that, if one starts from a maximally entangled state, the fixed correlation approach describes the dynamics of only a single state, while with the OPD one is able to characterize a $(d^4-1)$-dimensional subspace.
Furthermore, the fixed correlations master equation can present a negative rate since $t=0$, which makes the unravelings notoriously more complicated; the OPD formalism, on the other hand, avoids such issue, as shown explicitly in the dephasing example of Sec.~\ref{subsec:dephasing}.

Our results allow not only for the use of powerful simulation methods to the most general scenarios, but also for a deeper understanding of the reduced dynamics of initially correlated system and environment.

\section*{Acknowledgements}

F.S. acknowledges support from Magnus Ehrnroothin S\"a\"ati\"o.
A.S. and B.V. acknowledge support from MUR and Next Generation EU via the PRIN 2022 Project “Quantum Reservoir Computing (QuReCo)” (contract n. 2022FEXLYB)  and the NQSTI-Spoke1-BaC project QSynKrono (contract n. PE00000023-QuSynKrono).
D.C. was supported by the Polish National Science Center under Project No. 2018/30/A/ST2/00837.

\bibliography{biblio}

\appendix
\section{Stochastic unravelings of open system dynamics}
\label{app:unravelings}
{Here, we provide a small overview of some widely used unraveling methods, which are used in our work.}

\subsection{Monte-Carlo Wave Function}
\label{subsec:MCWF}
Whenever the dynamics is CP divisible (i.e. the rates $\gamma_j(t)$ in the master equation \eqref{eq:GKSL_ME} are positive at all times), it is possible to unravel the dynamics via the so-called Monte-Carlo Wave Function (MCFW) method \cite{Dalibard-MCWF}.
The deterministic evolution is given by
\begin{equation}
    \label{eq:MCWF_det}
    \ket{\psi(t)}\mapsto\ket{\psi(t+dt)} = \frac{(\id - iK(t)\,dt)\ket{\psi(t)}}{\lVert(\id - iK(t)\,dt)\ket{\psi(t)}\rVert}
\end{equation}
with the effective non-Hermitian Hamiltonian
\begin{equation}
    \label{eq:K_Hamiltonian}
    K(t) \coloneqq H(t) - \frac i2 \Gamma(t),
\end{equation}
and is interrupted by sudden jumps
\begin{equation}
    \label{eq:MCWF_jump}
    \ket{\psi(t)}\mapsto\ket{\psi(t+dt)} = \frac{L_j(t)\ket{\psi(t)}}{\lVert L_j(t)\ket{\psi(t)}\rVert}
\end{equation}
taking place with probability
\begin{equation}
    \label{eq:MCWF_jump_prob}
    p^j_{\psi(t)} = \gamma_j(t)\lVert L_j(t)\ket{\psi(t)}\rVert^2dt.
\end{equation}
Naturally, this method fails whenever $\gamma_j(t)<0$, since it would give rise to negative probabilities for the jumps.

\subsection{Non-Markovian Quantum Jumps}
\label{subsec:NMQJ}

Whenever the rates $\gamma_j(t)$ become temporarily negative, it is possible to unravel open system dynamics via the non-Markovian Quantum Jumps (NMQJ) technique \cite{Piilo-NMQJ-PRL,Piilo-NMQJ-PRA}.
The deterministic evolution is as in Eq.~\eqref{eq:MCWF_det}.
For the jumps, if the rate $\gamma_j(t)$ is positive, they are as in Eq.~\eqref{eq:MCWF_jump}.
If, on the other hand, $\gamma_j(t)<0$, then the jumps will be
\begin{equation}
    \label{eq:NMQJ_jump}
    \ket{\psi_i(t)} = \frac{L_j(t)\ket{\psi_{i^\prime}(t)}}{\lVert L_j(t)\ket{\psi_{i^\prime}(t)}\rVert} \mapsto \ket{\psi_{i^\prime}(t)}
\end{equation}
with probability
\begin{equation}
    \label{eq:NMQJ_jump_prob}
    p^j_{\psi_i(t)\to\psi_{i^\prime}(t)} = -\frac{N_{i^\prime}(t)}{N_i(t)}\gamma_j(t)\lVert L_j(t)\ket{\psi_{i^\prime}(t)}\rVert^2\,dt,
\end{equation}
where $N_{i,i^\prime}$ are the occupations of the realizations $\ket{\psi_{i,i^\prime}(t)}$, as in Eq.~\eqref{eq:avg_unravelings}.
These jumps are known as reverse jumps and are possible only if the source state $\ket{\psi_i(t)}$ is the possible target of a jump with a positive rate.
The reverse jump therefore has the effect of canceling a jump that had previously happened.
It is worth stressing that the probability of the reverse jump depends on the target state, instead of the source, and on the ratio $N_{i^\prime}(t)/N_i(t)$ and therefore the different stochastic realizations are not independent, thus making the simulations more expensive.

\subsection{Generalized Rate Operator}
\label{subsec:RO}
The master equation \eqref{eq:GKSL_ME} can be written as the sum of a jump term
\begin{equation}
    \label{eq:jump_ME}
    \mathcal J_t[\rho] \coloneqq \sum_j\gamma_j(t)L_j(t)\rho L_j^\dagger(t)
\end{equation}
and a driving term
\begin{equation}
    \label{eq:driving_ME}
    \mathcal D_t[\rho] \coloneqq - i(K(t)\rho - \rho K^\dagger(t)),
\end{equation}
where $K(t)$ is the non-Hermitian Hamiltonian of Eq.~\eqref{eq:K_Hamiltonian}.
Such a decomposition, however, is highly non-unique: any transformation \cite{Chruscinski-Quantum-RO}
\begin{gather}
    \label{eq:transf_J}
    \mathcal J_t[\rho]\mapsto\mathcal J_t^\prime[\rho] \coloneqq \mathcal J_t[\rho] + \frac12(C(t)\rho + \rho C^\dagger(t)),\\
    \label{eq:transf_K}
    K(t)\mapsto K^\prime(t) \coloneqq K(t)-\frac i2 C(t)
\end{gather}
preserves the structure of the master equation.

From this decomposition, it is possible to define an unraveling method with the deterministic evolution as in Eq.~\eqref{eq:MCWF_det}, with $K(t)$ substituted by $K^\prime(t)$, and jumps to the eigenstates $\ket{\varphi^i_{\psi(t)}}$ of the so-called rate operator (RO) \cite{Chruscinski-Quantum-RO,Diosi-orthogonal-jumps,Smirne-W}
\begin{equation}
    \label{eq:RO}
    R_{\psi(t)} \coloneqq \mathcal J^\prime[\ketbra{\psi(t)}],
\end{equation}
happening with probability
\begin{equation}
    \label{eq:RO_prob}
    p^i_{\psi(t)} = \lambda^i_{\psi(t)}\,dt,
\end{equation}
where $\lambda^i_{\psi(t)}$ is the eigenvalue corresponding to $\ket{\varphi^i_{\psi(t)}}$.
Note that both the eigenvalues and the eigenvectors depend on the pre-jump state.

In \cite{Settimo-psi-ro}, it was shown that, given that the stochastic realization is the state $\ket{\psi(t)}$, then it is possible to consider transformations $C(t)$ that depend not only on time, but also on the state of the realization $\ket{\psi(t)}$: $C(t)\mapsto C_{\psi(t)}$.
This leads to the introduction of the generalized RO ($\Psi$-RO), defined as
\begin{equation}
    \label{eq:psi-RO}
    \Psi\text{-}R_{\psi(t)} \coloneqq \mathcal J_t[\ketbra{\psi(t)}]+\frac12\left[\ket{\Phi_{\psi(t)}}\bra{\psi(t)} + \ket{\psi(t)}\bra{\Phi_{\psi(t)}}\right],
\end{equation}
with $\ket{\Phi_{\psi(t)}}\coloneqq C_{\psi(t)}\ket{\psi(t)}$.
The deterministic evolution is again as in Eq.~\eqref{eq:MCWF_det}, but with a non-linear effective Hamiltonian
\begin{equation}
    K_{\psi(t)} = H(t) - \frac i2\Gamma(t) - \frac i2 \ket{\Phi_{\psi(t)}}\bra{\psi(t)},
\end{equation}
and the jumps are to the eigenstates of $\Psi\text{-}R_{\psi(t)}$, with probability proportional to the corresponding eigenvalue.

With this new method, it is also possible to unravel some non-Markovian dynamics without the need to use reverse jumps.
Nevertheless, when this is not possible, both the RO and the $\Psi$-RO can be equipped with reverse jumps as the NMQJ technique of Sec.~\ref{subsec:NMQJ}.
We refer to unravelings in which only jumps with positive rates occur as positive unravelings.

\subsection{Quantum State Diffusion}
\label{subsec:QSD}
Quantum state diffusion (QSD) represents a well-known and widely used diffusive unraveling of open system dynamics. In QSD, {assuming that all rates are positive,} the state vector $\ket{\psi}$ obeys the stochastic Schr\"odinger equation \cite{Gisin1992}
\begin{equation}
    \label{eq:SSE}
    \begin{split}
        \ket{d\psi} =& -i H\ket{\psi}dt\\
        &+\sum_j\gamma_j\left(\braket{L_j^\dagger}_{\psi}L_j - L_j^\dagger L_j-\braket{L_j^\dagger}_{\psi}\braket{L_j}_{\psi}\right)\ket{\psi}dt\\
        &+\sum_j\sqrt{\gamma_j}\left(L_j-\braket{L_j}_\psi\right)\ket\psi dW_j,
    \end{split}
\end{equation}
where the explicit time dependence has been neglected, $\braket{L_j}_\psi = \braket{\psi\vert L_j\vert\psi}$, and $dW_j$ are independent complex Wiener processes satisfying
\begin{gather}
    \mathbb E[\Re(dW_j)\Re(dW_{j^\prime})] = \mathbb E[\Im(dW_j)\Im(dW_{j^\prime})] = \frac12\delta_{j,j^\prime}dt\\
    \mathbb E[dW_j] = \mathbb E[\Re(dW_j)\Im(dW_{j^\prime})] = 0,
\end{gather}
where $\mathbb E$ is the expectation value over the trajectories.

The stochastic Schr\"odinger equation \eqref{eq:SSE} is readily simulated by adding to the deterministic drift (the first two lines) the stochastic term, with the Wiener processes that are sampled by simply generating random complex Gaussian numbers with zero mean and standard deviation $\sqrt{dt}$.
The exact dynamics of Eq.~\eqref{eq:GKSL_ME} is then obtained by averaging over many realizations of the stochastic process.
As it happened for the MCWF, the positivity of all rates is required.
{However, it is possible to generalize the QSD to master equations having also negative rates via the non-Markovian QSD \cite{Diosi-NMQSD}.}

\section{Compatible system states with the fixed correlations approach}
\label{app:comp_max_ent_max_mix}

We now investigate the set of compatible states in the fixed correlations approach, as in Eq.~\eqref{eq:physical_domain_single_ME} for global states consisting of mixtures of the maximally entangled and the maximally mixed state.
Let $\dim\mathscr H_S=\dim\mathscr H_E=n$ and
\begin{equation}
    \label{eq:mix_max_ent_max_mix}
    \rho_{SE}(\lambda) = \lambda \ketbra{\Psi_{SE}} + (1-\lambda)\frac{\id_n}n\otimes\frac{\id_n}n,
\end{equation}
for $0\le\lambda\le1$, where
\begin{equation}
    \ket{\Psi_{SE}} = \frac1{\sqrt n}\sum_{i=1}^n\ket{\varphi_S^i}\otimes\ket{\varphi_E^i}
\end{equation}
is a maximally entangled state, where $\{\varphi_S^i\}_i$ and $\{\varphi_E^i\}_i$ are orthonormal sets for (respectively) the system and the environment.
The environmental marginal is the maximally mixed state, independently of $\lambda$, $\rho_E = \id_n/n$, while the correlations read
\begin{equation}
    \chi(\lambda) = \lambda\left[\ketbra{\Psi_{SE}}-\frac{\id_n}n\otimes\frac{\id_n}n\right].
\end{equation}

\newtheorem{proposition}{Proposition}
\begin{proposition}
    The set of compatible states $\rho_S(\lambda)$ such that $\rho_S(\lambda)\otimes\rho_E+\chi(\lambda)\ge0$ consists of all states of the form
    \begin{equation}
        \rho_S(\lambda) =\lambda \frac{\id_n}n + (1-\lambda)\sigma_S,
    \end{equation}
    where $\sigma_S$ is an arbitrary state in $\mathcal S(\mathscr H_S)$.
\end{proposition}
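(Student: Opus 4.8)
The plan is to prove the two set inclusions separately, one by direct substitution and the other by a short dimension count. For the easy direction --- showing that every state of the displayed form is compatible --- I would insert $\rho_S(\lambda)=\lambda\,\id_n/n+(1-\lambda)\sigma_S$ into the compatibility operator and use $\rho_E=\id_n/n$. The two maximally mixed contributions then cancel against the corresponding term of $\chi(\lambda)$, leaving
\begin{equation}
  \rho_S(\lambda)\otimes\rho_E+\chi(\lambda)=(1-\lambda)\,\sigma_S\otimes\frac{\id_n}{n}+\lambda\ketbra{\Psi_{SE}},
\end{equation}
which is manifestly positive for every state $\sigma_S$, being a sum of two positive operators.

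For the reverse direction I would introduce the shifted operator $\tau\coloneqq\rho_S-\lambda\,\id_n/n$, self-adjoint with $\tr\tau=1-\lambda$, and rewrite the constraint $\rho_S\otimes\rho_E+\chi(\lambda)\ge0$ --- again using $\rho_E=\id_n/n$, so that $\rho_S\otimes\rho_E-\lambda\,\id_n/n\otimes\id_n/n=\tfrac1n\,\tau\otimes\id_n$ --- in the equivalent form
\begin{equation}
  \frac1n\,\tau\otimes\id_n+\lambda\ketbra{\Psi_{SE}}\ge0 .
\end{equation}
The goal then reduces to showing that this forces $\tau\ge0$: together with $\tr\tau=1-\lambda$, this yields $\tau=(1-\lambda)\sigma_S$ for a genuine state $\sigma_S$ when $\lambda<1$, and $\tau=0$ when $\lambda=1$, which is still of the asserted form since then $(1-\lambda)\sigma_S=0$.

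To prove $\tau\ge0$ I would argue by contradiction, the case $n=1$ being trivial. Assume $\tau$ has a negative eigenvalue $\mu<0$ with unit eigenvector $\ket{\phi_S}$. The subspace $\ket{\phi_S}\otimes\mathscr H_E$ has dimension $n\ge2$, while the orthogonal complement of $\ket{\Psi_{SE}}$ has codimension one, so their intersection contains a nonzero vector $\ket W$. For such $\ket W$ one has $\braket{\Psi_{SE}|W}=0$ and $(\tau\otimes\id_n)\ket W=\mu\ket W$, hence
\begin{equation}
  \bra{W}\!\left(\frac1n\,\tau\otimes\id_n+\lambda\ketbra{\Psi_{SE}}\right)\!\ket{W}=\frac{\mu}{n}\,\norm{W}^2<0,
\end{equation}
contradicting the positivity condition above. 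I expect this to be the one non-routine step: it rests on the observation that a negative eigendirection of $\tau$, tensored with the entire environment, necessarily survives inside the codimension-one complement of the maximally entangled vector, so the rank-one term $\lambda\ketbra{\Psi_{SE}}$ cannot rescue positivity. Everything else is bookkeeping --- the cancellation in the first display, which relies crucially on $\rho_E$ being maximally mixed, and the two boundary cases $\lambda=1$ and $n=1$.
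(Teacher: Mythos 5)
Your proof is correct, and it takes a genuinely different route from the paper's. The paper works componentwise in the product basis $\{\varphi_S^i\otimes\varphi_E^r\}$, writes out the matrix elements of $B(\lambda)=\rho_S(\lambda)\otimes\rho_E+\chi(\lambda)$, and extracts the constraints from the nonnegativity of the diagonal entries and of the $2\times2$ principal minors, concluding that $\rho_S(\lambda)-\lambda\,\id_n/n$ must be a positive matrix of trace $1-\lambda$. You instead work at the operator level: the single identity $B(\lambda)=\tfrac1n\,\tau\otimes\id_n+\lambda\ketbra{\Psi_{SE}}$ with $\tau=\rho_S-\lambda\,\id_n/n$ makes sufficiency immediate (a sum of two positive operators), and necessity follows from your codimension count, since a negative eigendirection of $\tau$ tensored with $\mathscr H_E$ spans an $n$-dimensional subspace that must intersect the codimension-one orthocomplement of $\ket{\Psi_{SE}}$, so the rank-one term cannot restore positivity there. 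Your version is basis-free and shorter, and it has a rigor advantage on the sufficiency direction: nonnegative diagonal entries together with nonnegative $2\times2$ principal minors are necessary but not in general sufficient for positive semidefiniteness, so the paper's ``iff'' really only delivers the necessity half cleanly, whereas your direct substitution settles sufficiency without appeal to minor conditions. What the paper's computation buys in exchange is an explicit picture of the entries of $B(\lambda)$, including which constraints are trivially satisfied and which bind, but as a proof of the proposition your argument is the tighter one. The only points to watch --- the $n=1$ and $\lambda=1$ boundary cases --- you have handled explicitly.
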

\begin{proof}
    In the basis $\{ \varphi_S^i \otimes \varphi_E^r \}_{i, r = 1,
\ldots, n}$, the operator $B (\lambda) \coloneqq \rho_S (\lambda) \otimes \rho_E + \chi (\lambda)$ reads
\begin{equation}
  B_{ir},_{js} (\lambda) = \left( \rho_S^{ij} (\lambda) -
  \frac{\lambda}{n} \delta_{ij} \right) \frac{1}{n} \delta_{rs} +
  \frac{\lambda}{n} \delta_{ir} \delta_{js},
\end{equation}
and is positive iff
\begin{enumerate}
  \item $B_{ir},_{ir} (\lambda) \geqslant 0$ \qquad$i, r = 1, \ldots, n$
  
  \item $| B_{ir},_{js} (\lambda) |^2 \leqslant B_{ir},_{ir} (\lambda)
  B_{js},_{js} (\lambda)$\qquad$i, r, j, s = 1, \ldots, n$ and $(i, r) \neq
  (j, s)$.
\end{enumerate}
These conditions become
\begin{enumerate}
    \item $\forall r = 1, \ldots, n$
    \begin{equation}
        \left( \rho_S^{ii} (\lambda) - \frac{\lambda}{n} \right) + \lambda
        \delta_{ir} \geqslant 0,
    \end{equation}
    and therefore
    \begin{equation}
          \rho_S^{ii} (\lambda) \geqslant \frac{\lambda}{n} \qquad \forall i =
          1, \ldots, n.
    \end{equation}
    It is convenient to consider the notation
    \begin{equation}
      \rho_S^{ii} (\lambda) = \frac{\lambda}{n} + q_i \qquad \forall i = 1,
      \ldots, n
    \end{equation}
    with $q_i \geqslant 0$ and $\sum_{i = 1}^n q_i = (1 - \lambda)$. For the
    initial maximally entangled state ($\lambda = 1$) the $q_i$ all vanish, while
    for the initial maximally mixed and hence factorized state ($\lambda = 0$) the
    $q_i$ are a probability distribution.
    \item For $(i, r) \neq (j, s)$, we have 
    \begin{equation}
      \left| \left( \rho_S^{ij} (\lambda) - \frac{\lambda}{n} \delta_{ij}
      \right) \delta_{rs} + \lambda \delta_{ir} \delta_{js} \right|^2 \leqslant
      (q_i + \lambda \delta_{ir}) (q_j + \lambda \delta_{js}),
    \end{equation}
    so that for $r \neq s$ the constraint is trivially satisfied
    \begin{equation}
      \lambda^2 \delta_{ir} \delta_{js} \leqslant (q_i + \lambda \delta_{ir})
      (q_j + \lambda \delta_{js}),
    \end{equation}
    while for $r = s$ and $i \neq j$ we have the non-trivial constraint
    \begin{equation}
      | \rho_S^{ij} (\lambda) |^2 \leqslant q_i q_j + \lambda (q_j
      \delta_{ir} + q_i \delta_{jr}) \qquad \forall r = 1, \ldots, n
    \end{equation}
    leading to
    \begin{equation}
      | \rho_S^{ij} (\lambda) |^2 \leqslant q_i q_j .
    \end{equation}
\end{enumerate}
We can thus write
\begin{equation}
  \rho_S (\lambda) = \lambda \frac{\id_n}{n} + Q
\end{equation}
with $Q$ a matrix with diagonal matrix elements fixed by the $q_i$, so that
$Q_{ii} \geqslant 0$ and $\sum_{i = 1}^n Q_{ii} = 1 - \lambda$, $|
Q_{ij} |^2 \leqslant Q_{ii} Q_{jj}$, that is a positive matrix with trace
equal to $(1 - \lambda)$. Apart from a multiplicative factor, such a matrix
is a statistical operator $Q = (1-\lambda)\sigma_S$.
\end{proof}

It is worth stressing that in the limit $\lambda = 0$, any state is compatible, since the global state is factorized.
For $\lambda=1$, instead, the global state is the maximally entangled state and the only compatible reduced system state is the maximally mixed state $\rho_S = \id_n/n$.

\section{OPD for $d=4$}
\label{app:OPD_d=4}

For the 4-dimensional example considered in Sec.~\ref{subsec:dephasing}, we considered a system side frame that generalises the one of Eq.~\eqref{eq:sys_operators_qubits} of $d=2$, i.e.
\begin{equation}
    Q_0 = \frac14\id_4 - \frac12\sum_\alpha\sigma_\alpha,\quad
    Q_\alpha = \frac12\sigma_\alpha,
\end{equation}
where $\sigma_\alpha$ are the generalized Pauli matrices
\begin{widetext}
\begin{gather}
    \sigma_1 = \begin{pmatrix}
        0&1&0&0\\
        1&0&0&0\\
        0&0&0&0\\
        0&0&0&0
    \end{pmatrix},\quad
    \sigma_2 = \begin{pmatrix}
        0&-i&0&0\\
        i&0&0&0\\
        0&0&0&0\\
        0&0&0&0
    \end{pmatrix},\quad
    \sigma_3 = \begin{pmatrix}
        1&0&0&0\\
        0&-1&0&0\\
        0&0&0&0\\
        0&0&0&0
    \end{pmatrix},\quad
    \sigma_4 = \begin{pmatrix}
        0&0&1&0\\
        0&0&0&0\\
        1&0&0&0\\
        0&0&0&0
    \end{pmatrix},\\
    \sigma_5 = \begin{pmatrix}
        0&0&-i&0\\
        0&0&0&0\\
        i&0&0&0\\
        0&0&0&0
    \end{pmatrix},\quad
    \sigma_6 = \begin{pmatrix}
        0&0&0&0\\
        0&0&1&0\\
        0&1&0&0\\
        0&0&0&0
    \end{pmatrix},\quad
    \sigma_7 = \begin{pmatrix}
        0&0&0&0\\
        0&0&-i&0\\
        0&i&0&0\\
        0&0&0&0
    \end{pmatrix},\quad
    \sigma_8 = \frac1{\sqrt3}\begin{pmatrix}
        1&0&0&0\\
        0&1&0&0\\
        0&0&-2&0\\
        0&0&0&0
    \end{pmatrix},\\
    \sigma_9 = \begin{pmatrix}
        0&0&0&1\\
        0&0&0&0\\
        0&0&0&0\\
        1&0&0&0
    \end{pmatrix},\quad
    \sigma_{10} = \begin{pmatrix}
        0&0&0&-i\\
        0&0&0&0\\
        0&0&0&0\\
        i&0&0&0
    \end{pmatrix},\quad
    \sigma_{11} = \begin{pmatrix}
        0&0&0&0\\
        0&0&0&1\\
        0&0&0&0\\
        0&1&0&0
    \end{pmatrix},\quad
    \sigma_{12} = \begin{pmatrix}
        0&0&0&0\\
        0&0&0&-i\\
        0&0&0&0\\
        0&i&0&0
    \end{pmatrix},\\
    \sigma_{13} = \begin{pmatrix}
        0&0&0&0\\
        0&0&0&0\\
        0&0&0&1\\
        0&0&1&0
    \end{pmatrix},\quad
    \sigma_{14} = \begin{pmatrix}
        0&0&0&0\\
        0&0&0&0\\
        0&0&0&-i\\
        0&0&i&0
    \end{pmatrix},\quad
    \sigma_{15} = \frac1{\sqrt6}\begin{pmatrix}
        1&0&0&0\\
        0&1&0&0\\
        0&0&1&0\\
        0&0&0&-3
    \end{pmatrix}.
\end{gather}
\end{widetext}
For the initial state
\begin{equation}
    \label{eq:init_state_dephasing_app}
    \ket{\Psi_{SE}} = \frac{\ket{0,0}+\ket{1,1}+\ket{2,2}+\ket{3,3}}{2}
\end{equation}
used in Eq.~\eqref{eq:init_state_dephasing}, the weights are
\begin{equation}
    w_\alpha = \tr\left[((\id_4+\sigma_\alpha)\otimes\id_4)\rho_{SE}\right]=1\qquad\forall\alpha.
\end{equation}
The environmental states
\begin{equation}
    \label{eq:rho_alpha_4d_app}
    \rho_\alpha = \tr_S\left[((\id_4+\sigma_\alpha)\otimes\id_4)\rho_{SE}\right]
\end{equation}
can be evaluated in a straightforward way, reading
\begin{equation}
    \rho_0 = \frac14\id_4,\qquad\rho_\alpha = \frac14\id_4+\sigma_\alpha.
\end{equation}

\end{document}